\definecolor{lime}{HTML}{A6CE39}
\DeclareRobustCommand{\orcidicon}{%
	\begin{tikzpicture}
	\draw[lime, fill=lime] (0,0) 
	circle [radius=0.16] 
	node[white] {{\fontfamily{qag}\selectfont \tiny ID}};
	\draw[white, fill=white] (-0.0625,0.095) 
	circle [radius=0.007];
	\end{tikzpicture}
	\hspace{-2mm}
}
\xdef\csname orcid\x\endcsname{\noexpand\href{https://orcid.org/\csname orcidauthor\x\endcsname}{\noexpand\orcidicon}}
\def\BibTeX{{\rm B\kern-.05em{\sc i\kern-.025em b}\kern-.08em
    T\kern-.1667em\lower.7ex\hbox{E}\kern-.125emX}}
\newtheorem{Definition}{Definition}
\newtheorem{Example}{Example}
\newtheorem{Theorem}{Theorem}
\newcommand{\routid}[1]{\mathbf{R}_{\mathrm{Id}}(#1)}
\newcommand{\routuni}[1]{\mathbf{R}_{\mathrm{Uni}}(#1)}
\newcommand{\routcut}[1]{\mathbf{R}_{\mathrm{Cut}}(#1)}
\newcommand{\hypid}[1]
{\mathcal{H}_{\mathrm{Id}}(#1)}
\newcommand{\hypuni}[1]{\mathcal{H}_{\mathrm{Uni}}(#1)}
\newcommand{\hypcut}[1]{\mathcal{H}_{\mathrm{Cut}}(#1)}
\newcommand{\weightdel}{w_{\mathrm{Del}}}
\newcommand{\weightcost}{w_{\mathrm{Cost}}}
\newcommand{\weightcap}{w_{\mathrm{Cap}}}
\newcommand{\weightupr}{w_{\mathrm{UPr}}}
\newcommand{\weightfpr}{w_{\mathrm{FPr}}}
\newcommand{\rhodel}{\rho_{\mathrm{Del}}}
\newcommand{\rhocost}{\rho_{\mathrm{Cost}}}
\newcommand{\rhocap}{\rho_{\mathrm{Cap}}}
\newcommand{\rhoupr}{\rho_{\mathrm{UPr}}}
\newcommand{\rhofpr}{\rho_{\mathrm{FPr}}}
\newcommand{\rhoapr}{1 - \rho_{\mathrm{UPr}}}
\newcommand{\rhospr}{1 - \rho_{\mathrm{FPr}}}
\begin{document}

\title{A Formal Model for Path Set Attribute Calculation in Network Systems
}

\makeatletter
\newcommand{\linebreakand}{%
  \end{@IEEEauthorhalign}
  \hfill\mbox{}\par
  \mbox{}\hfill\begin{@IEEEauthorhalign}
}
\makeatother
\author{
\IEEEauthorblockN{Giovanni Fiaschi\orcidA{}}
\IEEEauthorblockA{\textit{Radio New Concept and Algorithms} \\
\textit{Ericsson AB}\\
Stockholm, Sweden\\
Giovanni.Fiaschi@ericsson.com}
\and
\IEEEauthorblockN{Carlo Vitucci\orcidB{}}
\IEEEauthorblockA{\textit{Technology Management} \\
\textit{Ericsson AB}\\
Stockholm, Sweden \\
Carlo.Vitucci@ericsson.com}
\and
\IEEEauthorblockN{Thomas Westerbäck\orcidC{}}
\IEEEauthorblockA{\textit{Division of Mathematics and Physics} \\
\textit{Mälardalen University}\\
Västerås, Sweden \\
thomas.westerback@mdu.se}
\linebreakand
\IEEEauthorblockN{Daniel Sundmark\orcidD{}}
\IEEEauthorblockA{\textit{Computer Science and Software Enigineering} \\
\textit{Mälardalen University}\\
Västerås, Sweden \\
daniel.sundmark@mdu.se}
\and
\IEEEauthorblockN{Thomas Nolte\orcidE{}}
\IEEEauthorblockA{\textit{Division of Networked and Embedded Systems} \\
\textit{Mälardalen University}\\
Västerås, Sweden \\
thomas.nolte@mdu.se}
}

\maketitle

\begin{abstract}
In graph theory and its practical networking applications, e.g., telecommunications and transportation, the problem of finding paths has particular importance. Selecting paths requires giving scores to the alternative solutions to drive a choice. 
While previous studies have provided comprehensive evaluation of single-path solutions, the same level of detail is lacking when considering sets of paths. 
This paper emphasizes that the path characterization strongly depends on the properties under consideration. While property-based characterization is also valid for single paths, it becomes crucial to analyse multiple path sets. 
From the above consideration, this paper proposes a mathematical approach, defining a functional model that lends itself well to characterizing the path set in its general formulation. The paper shows how the functional model contextualizes specific attributes.
\end{abstract}

\begin{IEEEkeywords}
Network Optimization, Path Set Characterization, Routing, Error Probability;
\end{IEEEkeywords}

\section{Introduction}
\label{Sec:Introduction}

A network is a concept often used in several domains: a number of devices or users connected to transfer and exchange data. As a very general definition, in this paper \textit{a network is any collection of objects in which some pairs
of these objects are connected by links}~\cite{Easley2010}.
One major application field for the network concept is the telecommunications domain, particularly packet switching, whose purpose is to route packets among devices. Another field of application is transportation, where the purpose is to send vehicles through roads from one place to another. 

Routing is the process of choosing a path or a set of paths across one or more networks. Choosing paths implies considering several alternatives, giving a score to each of them, and eventually selecting the highest-scoring alternative. The described selection process is what existing routing algorithms do. 
Solid solutions exist for single-path routing, where a path is given a cost or length value, which is the sum of the costs of its component edges. When evaluating sets of multiple paths, scoring criteria are not equally clear. This paper proposes a mathematical model to determine descriptive and representative characteristics of path sets. The model has practical applications in telecommunication systems.

Section~\ref{sec:RelatedWorks} analyses how research has tackled the problem of characterizing network connectivity.
Section~\ref{Sec:GraphElements} recalls essential elements of graph theory and their interpretation for this context. Section~\ref{sec:weights} shows how edge attributes should be composed to calculate equivalent attributes in arbitrary path sets.
Section~\ref{sec:tools} introduces representations and transformations of path sets to help the property calculation. The full calculation proposal is presented in Section~\ref{sec:Math} and illustrated with notable examples. Section~\ref{Sec:polymatroids_of_vertex_weighted_hypergraphs} analyses the correlation between the paper's proposal and the network polymatroid representation, with an analysis on the discussed attributes.
Eventually, Section~\ref{sec:Conclusion} contains the paper's conclusions and possible future developments.
\section{Related Works}
\label{sec:RelatedWorks}

In telecommunications, and more generally in graph theory, analysing the best route is still a research topic. The simplest case, or single-path analysis, is a well-defined problem: finding the shortest route between two points. The “cost” could represent several properties: distance, time, money, and others. Here, algorithms such as Dijkstra~\cite{Dijkstra1959} and Bellman-Ford~\cite{Bellman1958, Ford1956} have stood the test of time. They focus on calculating the most efficient route, minimizing cost while crossing from one point to another. While most practical approaches focus on additive costs, there exist proposals considering single-path routing generalizations to arbitrary costs~\cite{Sobrinho2005}.

When searching for multiple paths, minimizing the cost alone is not sufficient. Since a single path already secures connectivity between the two points, a cost-based approach would just avoid additional paths. The reason for searching multiple paths is to improve other properties, such as fault resiliency. Previous approaches addressed the problem by adding a diversity score to the set of paths they aimed to find, sometimes implicitly. The diversity concept has several issues. Firstly, there are multiple diversity definitions between two paths. Secondly, extending the diversity concept from a pair of paths to a set of more than two is non-trivial and always risks missing some important properties of the set. Finally, all the definitions of diversity originate from another property, most often the cost, which does not necessarily relate to the property that needs improvement by adding multiple paths; in general, such an approach may mislead the search.  

Algorithms solving the K shortest path problem, such as Yen~\cite{Yen1971} or Eppstein~\cite{Eppstein1995} search for multiple paths while minimizing them with a minimum total cost, where the definition of diversity relies implicitly on the presence of at least one different edge, which in most cases is not effective. Others, such as Suurballe~\cite{Suurballe1974} and Bhandari~\cite{Bhandari1999}, focus on finding completely disjoint paths that do not share common features. 

Meanwhile, researchers such as Liu~\cite{Liu2018} and Chondrogiannis~\cite{Chondrogiannis2015} have explored how to integrate diversity more directly, defining the diversity of a path set as the minimum diversity among any two paths in the set.

Borodin et al.~\cite{Borodin2017} use diversity for high-quality result diversification in information retrieval, database management, and structure localization. 
Fiaschi et al.~\cite{Fiaschi2020} apply Borodin's concepts to path search by combining cost and diversity with a parameter tuning their relative importance.

Network graph analysis frequently utilizes the polymatroid mathematical model because it effectively captures the complex constraints of connectivity and interference while simplifying the development of optimization algorithms~\cite{Harks2014}. Notable examples where polymatroids serve as the primary mathematical model for analysing information flow are the works of Riemensberger et al.~\cite{Riemensberger2014}, Chekuri et al.~\cite{Chekuri2015}, and Lee et al.~\cite{Latham2006}.

\section{Routing Elements}
\label{Sec:GraphElements}

\begin{figure}[t!]
    \centering
    \includegraphics[width=0.8\linewidth, keepaspectratio]{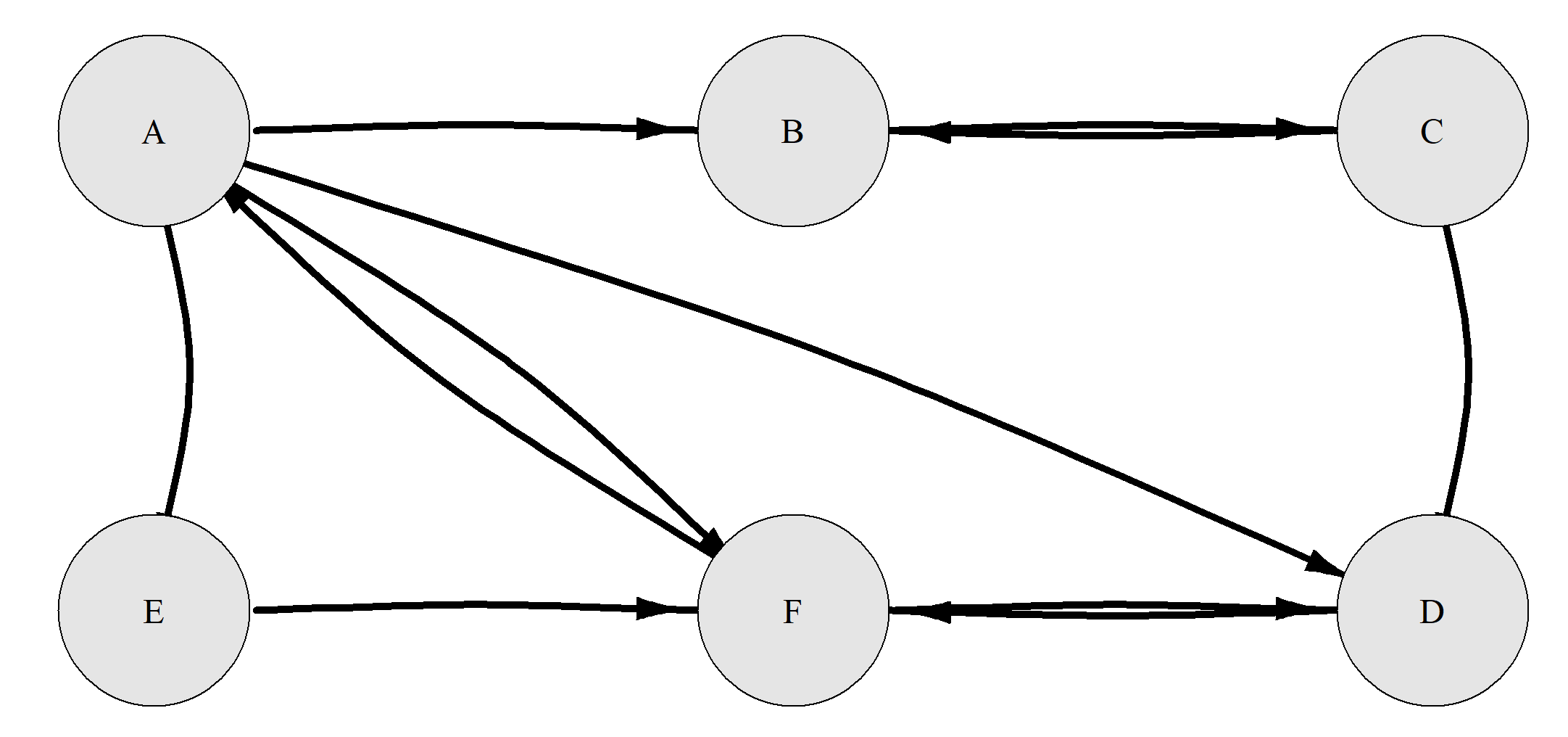}
    \caption{Generic Graph example}
    \label{fig:GenericGraphsexample}
\end{figure}

\subsection*{Graphs}
Both undirected graphs and directed graphs are considered in graph theory~\cite{Sandryhaila2014}. This paper will focus solely on directed graphs, as they are more applicable to telecommunication and transportation examples~\cite{monteiro2012}, where network links are traversed in a specific direction. 
For an abstract mathematical representation, a network is represented as a directed graph $\mathcal{G} = (V_g, E_g)$ (see Figure~\ref{fig:GenericGraphsexample}), defined by a set $V_g$ of vertices (the network nodes) and a set $E_g$ of edges (the network links), where $E_g$ is a subset of $V_g \times V_g$, the set of ordered pairs of vertices. 

\subsection*{Directed Paths}
\label{Sec:Paths}
Intuitively, a directed path in a directed graph is an essential concept in applications such as telecommunications and transportation, as it models a route from one network node to another and provides means to convey useful entities between the end nodes, such as information in telecommunications and physical vehicles in transportation.
Formally, a directed path from a source vertex $s$ to a destination vertex $d$, is a sequence of edges 
$p =$ $\{(v_0, v_1)$, ... $(v_{i-1}, v_{i})$, $(v_{i}, v_{i+1})$, ... $(v_{n-1}, v_n)\}$
such that $v_0, v_1, \ldots, v_n$ are distinct vertices, $s=v_0$, $d=v_n$ and $(v_{i-1}, v_i) \in E_g$ for $i = 1,2, \dots,n$. Other equivalent definitions may be found in literature~\cite{Latham2006, Hu2019, POTTER2007}.

\subsection*{Connections}
\label{sec:connections}
In graph theory, a vertex $x$ is connected to a vertex $y$ if a directed path exists from $x$ to $y$~\cite{Stallings2007, Kurose2021}. 
A vertex may be connected to another vertex in several different ways within a graph. For example, there might be multiple paths between the two vertices. 

\begin{Definition}
A \emph{path set} is a set of paths between the same two vertices.
\end{Definition}
\begin{Definition}
A \emph{connection} is the relationship between two vertices together with the attributes characterizing the relationship.
\end{Definition}
In other words, the definition of a connection refers to the source and destination vertices and its attributes, abstracting from the details of the paths that implement it.

\section{Routing Attributes} 
\label{sec:weights}

\subsection*{Attributes of a Graph Edge}
Edge attributes complete a graph definition, describing some property of the edge. This will help to describe more formally the properties of the connections introduced above. The weight of graph edges can be formally described as a function $w: E_g \rightarrow \mathbb{R}$ (a value of a network link). The meaning of the weight depends on its purpose in specific applications and may represent different concepts.
Examples can be delay, administrative cost, capacity, and fault probability.

\subsection*{Attributes of a Directed Path}
The weights assigned to the graph edges induce a weight on each directed path.
The weight of a path is generally defined as the sum of the weights of its edges.
However, for some edge properties it may be convenient to revise the definition according to how the property combines when crossing two or more edges in series. 

Different weight meanings require different serial operations. In the following the paper indicates the serial composition of weights with the \textit{$OP_s$} operation.

A simple example where the composition is not the sum is the capacity, where the path capacity is limited to the narrowest edge, so that \textit{$OP_s$} can be instantiated with the minimum.

\subsection*{Attributes of a Connection Implemented by Disjoint Paths}
Setting up one path between two points is essential to allow connections between these points, as when finding a route to travel from one site to another in the transportation application. The path choice determines a specific value of the path attributes, with a meaning on the characteristics of the path, like the time needed to travel from one site to another following that path.
If the value of the attribute reachable with a single path is not satisfactory for certain applications, one may want to establish several paths and achieve a change in that value. The composition of attributes using several paths in parallel follows of course a different operation with respect to the serial composition. 

Different weight meaning require different parallel operations. In the following, the \textit{$OP_p$} operation indicates the parallel composition of weights.

\subsection*{Attributes of a Connection Implemented by Arbitrary Paths}
In general, a connection may be implemented by a set partially overlapping paths. In that case, a straightforward combination of serial and parallel operations is inadequate for quantitatively representing the attribute in question. The analysis requires a suitable $T$ transformation of the path set. 
Different attributes may necessitate different transformations.
\section{Tools for Path Set Representation and Transformation} \label{sec:tools}

This chapter presents hypergraphs and their incidence matrices as a convenient tool for representing sets of sets of edges in a graph. A path set is a first example of a set of sets of edges in that a path is a set of edges. The chapter then introduces two transformations of a path set, the union and the cuts, that can also be represented with hypergraphs. Finally it uses vertex-weighted hypergraphs and their $r$-incidence matrices to incorporate attribute values in the hypergraph representation of path sets and their transformations. 

\begin{figure}[!t]
\centering
    \includegraphics[width=0.8\linewidth, keepaspectratio]{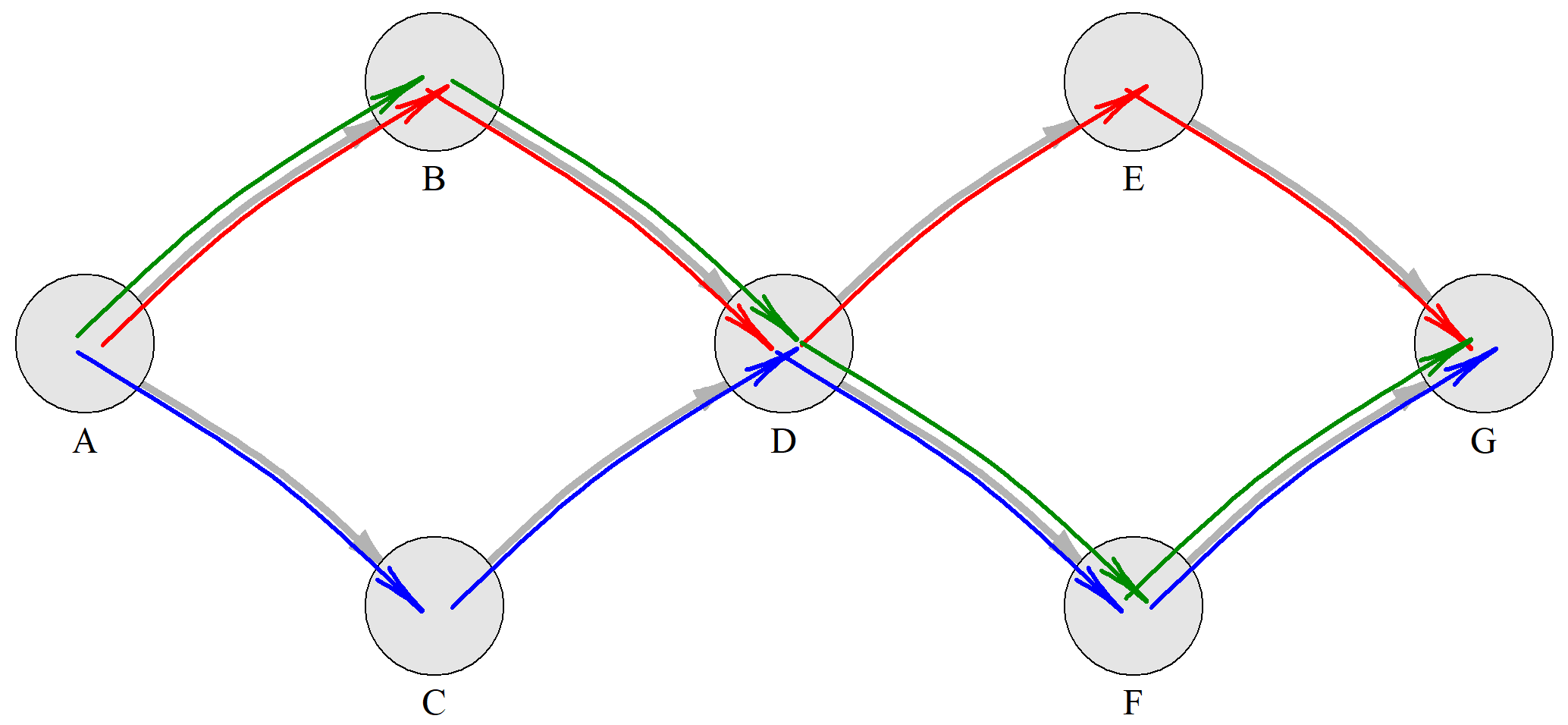}
  \caption{A graph with three paths}%
  \label{fig:3paths}
\end{figure}

\subsection*{Hypergraphs}

Hypergraphs were introduced by C.~Berge in the 1960s~\cite{berge1984hypergraphs} as a generalization of graphs where edges join any number of vertices.

A (finite) \textit{hypergraph} is the pair $\mathcal{H} = (V_h,E_h)$ where $V_h$ is a finite set of elements called \textit{vertices}, $E_h$ is a finite set of elements called \textit{edges}, and $e \subseteq V_h$ for each element $e \in E_h$.
Hypergraphs can be represented using incidence matrices.
The \textit{incidence matrix} $B=(b_{ij})_{m \times n}$ of a hypergraph $\mathcal{H} = (V_h,E_h)$ with $V_h=\{v_1,\ldots,v_m\}$ and $E_h=\{e_1,\ldots,e_n\}$ is defined by\\
\[
b_{ij} = \biggl\{ 
\begin{array}{ll}
1 & \hbox{if $v_i \in e_j$,}\\
0 & \hbox{otherwise.}
\end{array}
\]

A hypergraph can be convenient for describing path sets and their transformations. The next sections explain how to construct hypegraphs for specific representations.

\subsection*{Path Sets}

A set of directed paths can be described by a hypergraph, whose vertices correspond to the edges of the graph on which the paths are routed and whose edges correspond to the paths in the set.

For example, Figure \ref{fig:3paths} illustrates a set of three paths $P = \{p_1, p_2, p_3\}$ (green, red and blue in figure) defined over a graph $\mathcal{G}=(V_g, E_g)$, the three paths connecting vertices $A$ and $G$. 

In the following, the notation $\hypid{P}$ indicates the hypergraph describing path set $P$ and $\routid{P}$ its incidence matrix. In the example of Figure \ref{fig:3paths}, if $\hypid{P}=(V_h, E_h)$, then $V_h = E_g$, and $E_h = P = \{p_1, p_2, p_3\}$.

Table \ref{tab:PathSet_Transformation_tables}.B shows the incidence matrix $\routid{P}$ of $\hypid{P}$. 

\subsection*{Union}

A very simple transformation of a path set $P$ is the union of all the edges pertaining to its member paths. The union of a path set is a single set of edges, therefore it can be represented by a hypergraph with a single edge. 

The notations $\hypuni{P}$ and $\routuni{P}$ will indicate the hypergraph of the union of $P$ and its incidence matrix respectively.

Table \ref{tab:PathSet_Transformation_tables}.C shows $\routuni{P}$ for the example in Figure~\ref{fig:3paths}.

\begin{table*}[!t]
\centering
    \caption{Path set transformation tables}
\resizebox{\linewidth}{!}{%
\begin{tabular}{||l || c || c c c || c || c c c c c c c c c c c c || } 
\hline
\hline
& A & \multicolumn{3}{c ||}{B} & C & \multicolumn{12}{c ||}{D} \\
\hline
& $\mathbf{w}$ & \multicolumn{3}{c ||}{$\routid{P}$} & $\routuni{P}$ & \multicolumn{12}{c ||}{$\routcut{P}$}\\
\hline
    Edge & Weight & P1 & P2 & P3 & Union & C1 & C2 & C3 & C4 & C5 & C6 & C7 & C8 & C9 & C10 & C11 & C12 \\
\hline
    A→B & $w_1$ & 1 & 1 & 0 & 1 & 1 &   1 &   1 &   1 &   0 &   0 &   0 &   0 &   0 &   0 &   0 &   0 \\
    A→C & $w_2$ & 0 & 0 & 1 & 1 & 1 &   0 &   0 &   0 &   1 &   0 &   0 &   0 &   0 &   0 &   0 &   0 \\
    B→D & $w_3$ & 1 & 1 & 0 & 1 & 0 &   0 &   0 &   0 &   1 &   1 &   1 &   1 &   0 &   0 &   0 &   0 \\ 
    C→D & $w_4$ & 0 & 0 & 1 & 1 & 0 &   1 &   0 &   0 &   0 &   1 &   0 &   0 &   0 &   0 &   0 &   0 \\
    D→E & $w_5$ & 1 & 0 & 0 & 1 & 0 &   0 &   0 &   0 &   0 &   0 &   0 &   0 &   1 &   1 &   0 &   0 \\
    D→F & $w_6$ & 0 & 1 & 1 & 1 & 0 &   0 &   1 &   0 &   0 &   0 &   1 &   0 &   1 &   0 &   1 &   0 \\
    E→G & $w_7$ & 1 & 0 & 0 & 1 & 0 &   0 &   0 &   0 &   0 &   0 &   0 &   0 &   0 &   0 &   1 &   1 \\
    F→G & $w_8$ & 0 & 1 & 1 & 1 & 0 &   0 &   0 &   1 &   0 &   0 &   0 &   1 &   0 &   1 &   0 &   1 \\ 
    \hline
    \hline
    \end{tabular}%
    }
    \label{tab:PathSet_Transformation_tables}
\end{table*}

\begin{table*}[!b]
\centering
    \caption{\textit{r}-incidence matrices}
\setlength{\tabcolsep}{2pt}    
\resizebox{\linewidth}{!}{%
\begin{tabular}{|| l || c c || c c c c c c c c c c c c || c c c c c c c c c c c c || } 
\hline
\hline
& \multicolumn{2}{c ||}{A} & \multicolumn{12}{c ||}{B} & \multicolumn{12}{c ||}{C} \\
\hline
& Cap. & Prob. & \multicolumn{12}{c ||}{$R(0, c, \routcut{P})$} & \multicolumn{12}{c ||}{$R(1, p, \routcut{P})$} \\
\hline
    Edge & $\mathbf{c}$ & $\mathbf{p}$ & C1 & C2 & C3 & C4 & C5 & C6 & C7 & C8 & C9 & C10 & C11 & C12 & C1 & C2 & C3 & C4 & C5 & C6 & C7 & C8 & C9 & C10 & C11 & C12 \\
\hline
    A→B & 
    25 & 0.0050 &
    25 & 25 & 25 & 25 & 0 & 0 & 0 & 0 & 0 & 0 & 0 & 0 & 
    0.0050 & 0.0050 & 0.0050 & 0.0050 & 1 & 1 & 1 & 1 & 1 & 1 & 1 & 1 \\
    A→C & 
    10 & 0.0075 &
    10 & 0 & 0 & 0 & 10 & 0 & 0 & 0 & 0 & 0 & 0 & 0 & 
    0.0075 & 1 & 1 & 1 & 0.0075 & 1 & 1 & 1 & 1 & 1 & 1 & 1 \\
    B→D & 
    25 & 0.0070 &
    0 & 0 & 0 & 0 & 25 & 25 & 25 & 25 & 0 & 0 & 0 & 0 & 
    1 & 1 & 1 & 1 & 0.0070 & 0.0070 & 0.0070 & 0.0070 & 1 & 1 & 1 & 1 \\ 
    C→D & 
    25 & 0.0040 &
    0 & 25 & 0 & 0 & 0 & 25 & 0 & 0 & 0 & 0 & 0 & 0 & 
    1 & 0.0040 & 1 & 1 & 1 & 0.0040 & 1 & 1 & 1 & 1 & 1 & 1 \\
    D→E & 
    10 & 0.0115 &
    0 & 0 & 0 & 0 & 0 & 0 & 0 & 0 & 10 & 10 & 0 & 0 & 
    1 & 1 & 1 & 1 & 1 & 1 & 1 & 1 & 0.0115 & 0.0115 & 1 & 1 \\
    D→F & 
    25 & 0.0045 &
    0 & 0 & 25 & 0 & 0 & 0 & 25 & 0 & 25 & 0 & 25 & 0 & 
    1 & 1 & 0.0045 & 1 & 1 & 1 & 0.0045 & 1 & 0.0045 & 1 & 0.0045 & 1 \\
    E→G & 
    10 & 0.0105 &
    0 & 0 & 0 & 0 & 0 & 0 & 0 & 0 & 0 & 0 & 10 & 10 & 
    1 & 1 & 1 & 1 & 1 & 1 & 1 & 1 & 1 & 1 & 0.0105 & 0.0105 \\
    F→G & 
    100 & 0.0065 &
    0 & 0 & 0 & 100 & 0 & 0 & 0 & 100 & 0 & 100 & 0 & 100 & 
    1 & 1 & 1 & 0.0065 & 1 & 1 & 1 & 0.0065 & 1 & 0.0065 & 1 & 0.0065 \\ 
    \hline
    \hline
    \end{tabular}%
    }
    \label{tab:r-incidence}
\end{table*}

\subsection*{Cuts}

A cut is a well-known concept in graph theory: given a partition of the vertices into two disjoint subsets, a graph cut is the set of all edges that have one end vertex in each partition of the cut~\cite{Diestel2000}. 

Given a path set connecting two vertices $S$ and $D$, this paper introduces the expression \textit{cut of the path set} for a minimal edge set that, if removed, will make $S$ and $D$ unreachable within the path set. The cut is minimal in the sense that all the edges in the cut must be removed from the path set to make the vertices unreachable. 

For example, in Figure~\ref{fig:3paths}, the set of edges $B \rightarrow D$ and $F \rightarrow G$ is a cut of the path set, but is not a graph cut, because it is still possible to reach $G$ from $A$. 

The notations $\hypcut{P}$ and $\routcut{P}$ will indicate the hypergraph of the cuts of $P$ and its incidence matrix respectively.

Table \ref{tab:PathSet_Transformation_tables}.D shows $\routcut{P}$ for the example in Figure~\ref{fig:3paths}.

\subsection*{Edge Property Vector}
A column vector $\mathbf{w}$ encodes the edge properties by associating each edge with a value corresponding to the property of interest.
This vector will come handy to distribute the property values over matrices representing a path set or one of its transformations. 
Table \ref{tab:PathSet_Transformation_tables}.A shows a vector $\mathbf{w} = (w_1,\ldots,w_8)$ of properties for the graph in Figure~\ref{fig:3paths}.

\subsection*{Vertex-Weighted Hypergraphs}
This section defines vertex-weighted hypergraphs, which are a generalization of hypergraphs. For vertex-weighted hypergraphs, the paper also introduces \( r \)-incidence matrices, which generalize the concept of incidence matrices for hypergraphs.

A \textit{vertex-weighted hypergraph} is a tuple $\mathcal{H} = (V_h,E_h,w)$ where $(V_h,E_h)$ is a hypergraph and $w:V_h \rightarrow \mathbb{R}$ is a weight function which assign a real number $w(v)$ to each vertex $v \in V_h$.

If a hypergraph represents a set of sets of edges in a graph $\mathcal{G}$, and $\mathcal{G}$ has an associated edge property vector $\mathbf{w}$, then a vertex-weighted hypergraph can be easily constructed by associating $\mathbf{w}$ to the vertices of the hypergraph, since they correspond to the edges of $\mathcal{G}$.
\hfill\break

For any real number $r \in \mathbb{R}$, let the \textit{r-incidence matrix} $B_r = (b_{ij})_{m \times n}$ of a vertex-weighted hypergraph $\mathcal{H} = (V_h,E_h,w)$ with $V_h = \{v_1, \ldots, v_m\}$ and $E_h = \{e_1, \ldots, e_n\}$ be defined by\\
\[
b_{ij} = \biggl\{ 
\begin{array}{ll}
w(v_i) & \hbox{if $v_i \in e_j$,}\\
r & \hbox{otherwise.}
\end{array}
\]

Note that the 0-incidence matrix $B_0$ of a vertex-weighted hypergraph $\mathcal{H} = (V_h,E_h,w)$ with $w(v) = 1$ for all $v \in V$ equals the common incidence matrix $B$ of the hypergraph $(V_h,E_h)$.

The $r$-incidence matrices of special interest for this paper are the 0-incidence and 1-incidence matrices.

The notation $B_r = R(r, w, B)$ denotes the \(r\)-incidence matrix of a vertex-weighted hypergraph $\mathcal{H} = (V_h, E_h, w)$, where $B$ represents the incidence matrix of the underlying hypergraph $(V_h, E_h)$ without vertex weights.

Table~\ref{tab:r-incidence} shows \(r\)-incidence matrix examples for hypergraphs representing cut transformation of the path set in Figure~\ref{fig:3paths} with edge property vectors $\mathbf{c}$ (capacity) and $\mathbf{p}$ (unavailability or fault probability), as shown in Table~\ref{tab:r-incidence}.A.

\section{Calculation of Properties}
\label{sec:Math}

The conclusion of Section~\ref{sec:weights} suggests that, in a specific domain, a mathematical model characterizes a path set independently of the attribute under analysis. The mathematical model represents the intrinsic contributions of the path set: 
\begin{itemize}
    \item an operation that represents the contribution to the characterization due to the serialization of the edges in a path,
    \item an operation that represents the contribution to the characterization due to the parallelization of the paths in a set of disjoint paths,
    \item a transformation combing the two operations in an arbitrary path set.
\end{itemize}

Equation~\ref{equ:MathFramework} summarizes these three essential elements of the mathematical model, although not necessarily in the order laid down:
\vspace{-2mm}
\begin{equation}
\label{equ:MathFramework}
    \Psi(P) = OP_s \circ OP_p \circ T
\end{equation}

With the tools presented so far, it is now possible to present the complete attribute calculation in specific examples with serial and parallel compositions together with transformations.
\hfill\break

Table~\ref{tab:transformationsummary} shows a summary of five notable examples.

\begin{table}[!t]
\caption{Operations and transformations}
\label{tab:transformationsummary}
\centering
\small
\resizebox{\columnwidth}{!}{%
\begin{tabular}{||l|ccc||} 
 \hline
 \hline
 Example & $OP_s$ & $OP_p$ & T \\ [0.5ex] 
 \hline
 \hline
 Delay & $+$ & $max$ & $R(0, w, \routid{P})$ \\
\hline
 Administrative& $+$ & $+$ & $R(0, w, \routuni{P})$ \\
 cost&  &  &  \\
 \hline
 Combined& $min$ & $+$ & $R(0, w, \routcut{P})$ \\
 capacity&  &  &  \\
 \hline
 Unavailability & $1-((1-(.))\times$ & $\times$ & $R(1, w, \routcut{P})$  \\
  & $\times(1-(.)))$ &  &  \\
 \hline
 Fault& $+$ & $\times$ & $R(1, w, \routcut{P})$ \\ 
 probability&  &  &  \\ 
 \hline
 \hline
\end{tabular}
}
\end{table}

Table~\ref{tab:exampleinstatiations} shows their instantiations in a more direct notation. 

\begin{table}[!b]
\caption{Property examples instantiations}
\label{tab:exampleinstatiations}
\centering
\resizebox{\linewidth}{!}{
\begin{tabular}{||l | l ||}
\hline\hline
Delay($P$) & $\max_{i | p_i \in P}{\sum_{j | e_j \in p_i}{w_j}}\text{,}$ \\[1em]
Cost($P$) & $\sum_{i | e_i \in \bigcup P}{w_i}\text{,}$ \\[1em]
Capacity($P$) & $\min_{j}{\sum_{i}{T_{ij}}}\text{, where } \mathbf{T} = R(0, w, \routcut{P})\text{,}$ \\[1em]
UnavailProb($P$) & $1 - \prod_{j}{\left(1 - \prod_{i}{T_{ij}}\right)}\text{, where } \mathbf{T} = R(1, w, \routcut{P})$, \\[1em]
FaultProb($P$) & $\sum_{j}{\prod_{i}{T_{ij}}}\text{, where } \mathbf{T} = R(1, w, \routcut{P})$. \\
\hline\hline
\end{tabular}
}
\end{table}

The serial operations in the given examples are:
\begin{itemize}
    \item Sum for the delay (edge delays are simply summed)
    \item Sum for the administrative cost
    \item Minimum for the capacity (bottleneck edge)
    \item One's complement to product of the one's complements for the unavailability (because of probability theory~\cite{Dubrova2013})
    \item Sum for the fault probability (as simplification of the unavailability neglecting the higher order components~\cite{Dubrova2013})
\end{itemize}

The parallel operations in the given examples are:
\begin{itemize}
    \item Max for the delay (assuming to wait for the latest packet)
    \item Sum for the administrative cost
    \item Sum for the capacity
    \item Product for the unavailability and fault probability (because of probability theory~\cite{Dubrova2013})
\end{itemize}

The path set transformations are:
\begin{itemize}
    \item Identity for the delay, then serial first
    \item Union for the administrative cost
    \item Cuts for the capacity, unavailability and fault probability, then parallel first
\end{itemize}

The $r$ value in the \textit{r}-incidence matrices shall be the identity element of the operation to be performed first, therefore:
\begin{itemize}
    \item 0 for delay, administrative cost and capacity (sum first)
    \item 1 for unavailability and fault probability (product first)
\end{itemize}

Five examples of path set properties illustrate the application of the method: delay, administrative cost, capacity, unavailability, and fault probability. Table~\ref{tab:propvectexamples} shows their edge property vectors.

\subsection*{Delay}

Equation~\ref{Equ:example_transformation_delay}
shows the mathematical application example for $\mathbf{T}$ matrix in the delay calculation. 
\vspace{-2mm}
\[
\scalebox{0.9}{$
    Delay(P) = \max{\sum{_i}{T_{ij}}} = \ 
    \max{([340, 230, 225])} = \ 340\mu s
$}
\]
\vspace{-3mm}
\begin{equation} 
\label{Equ:example_transformation_delay}
\resizebox{0.7\columnwidth}{!}{$
\begin{aligned}
    \mathbf{T} = & R(0, w, \routid{P}) = \\ 
    & R \left(
    0,
    \begin{bmatrix}
    50 \\ 75 \\ 70 \\ 40 \\ 115 \\ 45 \\ 105 \\ 65
    \end{bmatrix}
    ,
    \begin{bmatrix}
     1 &   1 &   0 \\ 
     0 &   0 &   1 \\ 
     1 &   1 &   0 \\ 
     0 &   0 &   1 \\ 
     1 &   0 &   0 \\ 
     0 &   1 &   1 \\ 
     1 &   0 &   0 \\ 
     0 &   1 &   1 \\ 
    \end{bmatrix} 
    \right) =
    \begin{bmatrix}
     50 &   50 &   0 \\ 
     0 &   0 &   75 \\ 
     70 &   70 &   0 \\ 
     0 &   0 &   40 \\ 
     115 &   0 &   0 \\ 
     0 &   45 &   45 \\ 
     105 &   0 &   0 \\ 
     0 &  65 &  65 \\ 
    \end{bmatrix}
\end{aligned}
$}
\end{equation}

\begin{table}[!b]
\caption{Example edge property vectors}
\label{tab:propvectexamples}
\small
\centering
\resizebox{0.9\linewidth}{!}{
\begin{tabular}{||c|cccc||}
 \hline
 \hline
 & Delay
 & Admin. 
 & Capacity
 & Fault \\ 
 & (µs)
 & cost 
 & (Gbps)
 & probability \\ 
  \hline
  A→B & 50 & 100 &  25 & 0.0050 \\ 
  A→C & 75 & 150 &  10 & 0.0075 \\ 
  B→D & 70 & 300 &  25 & 0.0070 \\ 
  C→D & 40 & 200 &  25 & 0.0040 \\ 
  D→E & 115 & 500 &  10 & 0.0115 \\ 
  D→F & 45 & 100 &  25 & 0.0045 \\ 
  E→G & 105 & 450 &  10 & 0.0105 \\ 
  F→G & 65 & 100 &  100 & 0.0065\\ 
   \hline
 \hline
\end{tabular}
}
\end{table}

\subsection*{Administrative Cost}

Equation~\ref{Equ:example_transformation_cost}
shows the numerical value of the administrative cost. $\mathbf{T} =  R(0, w, \routuni{P})$, where Table~\ref{tab:PathSet_Transformation_tables}.C shows $\routuni{P}$.

\vspace{-1mm}
\begin{equation}
\label{Equ:example_transformation_cost}
\resizebox{0.9\columnwidth}{!}{$
\begin{aligned}
    Cost(P) = & \sum{}\mathbf{T} =\\
    & \sum{}{ \left(
    \begin{bmatrix}
    100 , 150 , 300 , 200 , 500 , 100 , 450 , 100
    \end{bmatrix} \right)
    } = \\
    & 1900
\end{aligned}
$}
\end{equation}

\subsection*{Capacity}

Equation~\ref{Equ:example_transformation_capacity}
shows the numerical value of the capacity. $\mathbf{T} = \ R(0, w, \routcut{P})$, as described in Table~\ref{tab:r-incidence}.B. 

\begin{equation}
\label{Equ:example_transformation_capacity}
\resizebox{0.7\columnwidth}{!}{$
\begin{aligned}
     Capacity(P) =   
     & \min\sum{_i}{T_{ij}} = \\ 
     & \min([35, 50, 50, 125, 35, 50,\\ 
     & 50, 125, 35, 110, 35, 110]) = \\ 
     & 35 Gbps
\end{aligned}
$}
\end{equation}

\subsection*{Unavailability and Fault Probability}

Equation~\ref{Equ:example_transformation_prob}
shows the numerical values of unavailability and fault probability. $\mathbf{T} = \ R(1, w, \routcut{P})$, as described in Table~\ref{tab:r-incidence}.C.
\vspace{-2mm}

\begin{equation}
\label{Equ:example_transformation_prob}
\resizebox{0.7\columnwidth}{!}{$
\begin{aligned}
    UnavailProb(P) = & 1 - \prod_{j}{\left(1 - \prod_{i}{T_{ij}}\right)} = \\ 
    & 0.0005118815
    &\\
    &\\
    FaultProb(P) = & \sum{_j}\prod{_i}{T_{ij}} = \\
    & 0.000512
\end{aligned}
$}
\end{equation}

where $\prod_{i}{T_{ij}}$ is:
\begin{equation*}
\resizebox{0.6\columnwidth}{!}{$
\begin{aligned}
    [ &
    0.00003750 , 
    0.00002000 , 
    0.00002250 ,\\ & 
    0.00003250 , 
    0.00005250 , 
    0.00002800 ,\\ & 
    0.00003150 , 
    0.00004550 , 
    0.00005175 ,\\ & 
    0.00007475 , 
    0.00004725 , 
    0.00006825 
    ] 
\end{aligned}
$}
\end{equation*}

It is worth noting how the fault probability calculation perfectly matches the HAM method described by Vitucci et al.~\cite{Vitucci2024} using a more "empirical" method.
\section{Polymatroids}
\label{Sec:polymatroids_of_vertex_weighted_hypergraphs}

Matroids, a field in algebraic combinatorics, and the generalization of matroids, polymatroids, are research areas with applications and connections to many other areas in mathematics and computer science. For example, polymatroid theory has proven to be very useful in combinatorial optimization~\cite{Schrijver2002-fy}. One of the strengths with matroids and polymatroids is the many equivalent axiom systems from different areas of mathematics that can be used to define these objects. This paper defines matroids and polymatroids using submodular set functions.

\begin{Definition}
\label{def:Polymatroid}
    Let $E$ be a finite set. A pair $\mathcal{P} = (\rho, E)$ is a (finite) \emph{polymatroid} on $E$ with set function $\rho: 2^E \rightarrow \mathbb{R}$ if $\rho$ satisfies the three conditions (R1), (R2), and (R3) for all $X,Y \subseteq E$. Note that a \emph{matroid} is a polymatroid which additionally satisfies the two conditions (R4) and (R5) for all $X \subseteq E$.
\[
\begin{array}{clcl}
(R1) & \rho(\emptyset) = 0, & (R4) & \rho(X) \in \mathbb{Z},  \\
(R2) & X \subseteq Y \Rightarrow \rho(X) \leq \rho(Y), & (R5) & \rho(X) \leq |X|, \\
(R3) & \rho(X) + \rho(Y) \geq \rho(X \cap Y) + & \\
& \rho(X \cup Y).& &
\end{array}
\]
\end{Definition}

\subsection*{Submodular and Supermodular Considerations}
Polymatroids have proven to be a highly useful mathematical tool in various areas of mathematics and computer science. 
In this chapter, let $E$ be a set of directed paths from a vertex $s$ to a vertex $d$ in a graph associated with a network.
Analysing delay, cost, capacity, unavailability probability, and fault probability of a path set $P \subseteq E$ leads to the definition of set functions $\rho: 2^E \rightarrow \mathbb{R}$.
The analysis of these set functions shows that the pair $(\rho, E)$ forms a polymatroid with respect to delay and cost, satisfying conditions (R1)–(R3) in Definition \ref{def:Polymatroid}.
However, for the concepts of capacity, unavailability probability, and fault probability, the situation is different.
A set function $\rho: 2^E \rightarrow \mathbb{R}$ is said to be \emph{submodular} if (R3) in Definition \ref{def:Polymatroid} holds, i.e., 
$
\rho(X) + \rho(Y) \geq \rho(X \cap Y) + \rho(X \cup Y)
$
for all $X, Y \subseteq E$, and $\rho$ is said to be \emph{supermodular} when it satisfies the opposite condition to submodular, i.e., 
$
\rho(X) + \rho(Y) \leq \rho(X \cap Y) + \rho(X \cup Y)
$
for all $X, Y \subseteq E$. A supermodular set function $\rho$ satisfying (R1) and (R2) can be transformed into a submodular set function $\rho': 2^E \rightarrow \mathbb{R}$ that satisfies (R1) and (R2) by defining, for all $X \subseteq E$,  
\begin{equation} 
\label{eq:supermodular_to_submodular}
\rho'(X) = \rho(E)-\rho(E-X)
\end{equation}
This transformation ensures that $(\rho', E)$ forms a polymatroid.

This paper examines the concepts of delay, cost, capacity, unavailability, and fault probability from a polymatrodial point of view. Interest in these mathematical structures stems from the capacity of polymatroid theory to provide a valuable framework for future research on path set attribute calculations in network systems.

\begin{Definition}
Let $E$ be a set of directed paths from a vertex $s$ to a vertex $d$ in the graph associated with a network. Furthermore, let $\weightdel$, $\weightcost$, $\weightcap$, $\weightupr$, and $\weightfpr$ be the weight functions on the edges of the graph associated with the concepts delay, cost, capacity, unavailability probability, and fault probability, respectively, on the network. 
For any subset $P \subseteq E$, let
\[
\begin{array}{clcl}
(i) & \rhodel(P) &=& \text{Delay}(P),\\
(ii) & \rhocost(P) &=& \text{Cost}(P),\\
(iii) & \rhocap(P) &=& \text{Capacity}(P),\\
(iv) & \rhoupr(P) &=& \text{UnavailProb}(P),\\
(v) & \rhofpr(P) &=& \text{FaultProb}(P),
\end{array}
\]
where $\text{Delay}(P)$, $\text{Cost}(P)$, $\text{Capacity}(P)$, $\text{UnavailProb}(P)$, and $\text{FaultProb}(P)$ are defined as in Table~\ref{tab:exampleinstatiations} with weight functions $\weightdel$, $\weightcost$, $\weightcap$, $\weightupr$, and $\weightfpr$, respectively.
\end{Definition}

\subsection*{Delay and Cost}

\begin{Theorem}
Let $E$ be a set of directed paths from a vertex $s$ to a vertex $d$ in the graph associated with a network. Then
\begin{tabular}{cl}
(i) & $(\rhodel, E)$ is a polymatroid,\\
(ii) & $(\rhocost, E)$ is a polymatroid.
\end{tabular}
\end{Theorem}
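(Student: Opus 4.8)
The plan is to recognise the two set functions as instances of classical submodular families: $\rhodel$ as a pointwise maximum of nonnegative path delays, and $\rhocost$ as a weighted coverage function. Once this is made precise, conditions (R1)–(R3) of Definition~\ref{def:Polymatroid} follow by routine arguments. Throughout I would use the nonnegativity of the edge weights $\weightdel$ and $\weightcost$ (delays and costs being physical quantities, hence $\ge 0$), which is what keeps the monotonicity condition (R2) and the boundary cases consistent.

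For (i), I would write $g(p) = \sum_{e \in p}\weightdel(e) \ge 0$ for the total delay of a path $p$, so that by Table~\ref{tab:exampleinstatiations} one has $\rhodel(P) = \max_{p \in P} g(p)$, with the convention $\max\emptyset = 0$. Then (R1) is immediate. For (R2), if $X \subseteq Y$ then $\rhodel$ is a maximum over a larger collection, so $\rhodel(X) \le \rhodel(Y)$ — the boundary case $X = \emptyset$ using $g \ge 0$. For (R3) I would argue by cases, assuming without loss of generality $\rhodel(X) \le \rhodel(Y)$: since $X \cap Y \subseteq X$ this gives $\rhodel(X\cap Y) \le \rhodel(X)$, while distributivity of $\max$ over set union yields $\rhodel(X \cup Y) = \max\bigl(\rhodel(X),\rhodel(Y)\bigr) = \rhodel(Y)$, whence $\rhodel(X\cap Y) + \rhodel(X\cup Y) \le \rhodel(X) + \rhodel(Y)$.

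For (ii), I would introduce the coverage map $U(P) = \bigcup_{p\in P} p \subseteq E_g$ sending a path set to the set of graph edges it uses, together with the nonnegative modular set function $w(F) = \sum_{e \in F}\weightcost(e)$ on $2^{E_g}$, so that $\rhocost(P) = w\bigl(U(P)\bigr)$ by Table~\ref{tab:exampleinstatiations}. Since $U(\emptyset) = \emptyset$, (R1) holds; since $U$ is monotone and $w$ is monotone (here $\weightcost \ge 0$ is used), (R2) holds. For (R3) the relevant set-theoretic facts are $U(X \cup Y) = U(X) \cup U(Y)$, with equality, and $U(X \cap Y) \subseteq U(X) \cap U(Y)$, the latter because every path $p \in X \cap Y$ satisfies $p \subseteq U(X) \cap U(Y)$. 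Applying the modular identity $w(A) + w(B) = w(A\cup B) + w(A\cap B)$ with $A = U(X)$, $B = U(Y)$ and then monotonicity of $w$, I would conclude
\[
\begin{aligned}
\rhocost(X) + \rhocost(Y) &= w\bigl(U(X)\cup U(Y)\bigr) + w\bigl(U(X)\cap U(Y)\bigr)\\
&\ge w\bigl(U(X\cup Y)\bigr) + w\bigl(U(X\cap Y)\bigr) = \rhocost(X\cup Y) + \rhocost(X\cap Y),
\end{aligned}
\]
which is (R3).

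I do not expect any step to be a serious obstacle; the statement essentially repackages standard submodularity facts. The two points that need care are, for $\rhodel$, the convention $\max\emptyset = 0$ combined with the nonnegativity of the edge delays, which is what reconciles (R1) with the boundary case of (R2); and, for $\rhocost$, keeping the inclusion $U(X\cap Y) \subseteq U(X)\cap U(Y)$ oriented correctly and invoking $\weightcost \ge 0$ at exactly that step, since this inclusion may be strict and is precisely where the coverage function genuinely fails to be modular.
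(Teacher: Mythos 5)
Your argument is correct and follows the same overall structure as the paper's proof: verify (R1)--(R3) directly, use the fact that a maximum over a union is the maximum of the maxima for $\rhodel$, and reduce $\rhocost$ to a sum of edge weights over the edge set of the path set. There is, however, one substantive difference at (R3) for $\rhocost$, and it is in your favour. The paper asserts the exact identity $\sum_{x \in \mathcal{E}(A)}\weightcost(x) + \sum_{y \in \mathcal{E}(B)}\weightcost(y) = \sum_{x \in \mathcal{E}(A \cup B)}\weightcost(x) + \sum_{y \in \mathcal{E}(A \cap B)}\weightcost(y)$, i.e., that $\rhocost$ is modular. That identity presupposes $\mathcal{E}(A)\cap\mathcal{E}(B)=\mathcal{E}(A\cap B)$, which fails whenever a path in $A\setminus B$ and a path in $B\setminus A$ share an edge: such an edge is counted twice on the left and only once on the right, so in general one only has $\mathcal{E}(A\cap B)\subseteq\mathcal{E}(A)\cap\mathcal{E}(B)$ and hence submodularity (not supermodularity) for nonnegative weights. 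Your derivation --- apply the genuinely modular identity to the edge-set function $w$ on $2^{E_g}$, then use monotonicity of $w$ to pass from $U(X)\cap U(Y)$ down to $U(X\cap Y)$ --- is the correct way to obtain (R3) and repairs this flaw; you also correctly flag that this is exactly the step where nonnegativity of $\weightcost$ is needed. Similarly, your explicit treatment of the convention $\max\emptyset=0$ and of nonnegative edge delays in (R2) for $\rhodel$ makes precise hypotheses that the paper uses only tacitly.
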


\begin{proof}
Proving that a set function constitutes a polymatroid requires demonstrating that the set function satisfies conditions (R1)–(R3) in Definition \ref{def:Polymatroid}.  For any path set $A \subseteq E$, let $\mathcal{E}(A)$ denote the set of edges in the path set $A$. 

Let $A, B \subseteq E$.

Statement (i):

(R1): Since the sum of an empty sum equals zero, it follows immediately that $\rhodel(\emptyset) = 0$.

(R2): Assume that $A \subseteq B$. Then
\[
\rhodel(A) = \max\limits_{a \in A} \rhodel(a) \leq \max\limits_{b \in B} \rhodel(b) = \rhodel(B).
\]

(R3): First, observe that
\[
\begin{array}{lcl}
\rhodel(A \cup B) &=& \max\limits_{e \in A \cup B} \rhodel(e) \\
&=& \max(\max\limits_{a \in A} \rhodel(a), \max\limits_{b \in B} \rhodel(b)) \\
&=& \max(\rhodel(A), \rhodel(B)).
\end{array}
\]
Without loss of generality, assume that $\rhodel(A \cup B) = \rhodel(A)$. Then
\[
\begin{array}{lcl}
\rhodel(A) + \rhodel(B) &=& \rhodel(A \cup B) + \rhodel(B) \\
& \overset{(R2)}{\geq} & \rhodel(A \cup B) + \rhodel(A \cap B).
\end{array}
\]

Statement (ii): Observe that
\[
\rhocost(A) = \sum\limits_{x \in \mathcal{E}(A)} \weightcost(x).
\]

(R1): Since the sum of an empty sum equals zero, it follows immediately that $\rhocost(\emptyset) = 0$.

(R2): Assume that $A \subseteq B$. Then
\[
\rhocost(A) = \sum\limits_{x \in \mathcal{E}(A)} \weightcost(x)\leq 
\sum\limits_{y \in \mathcal{E}(B)} \weightcost(y) = \rhocost(B).
\]
\hfill\break
(R3): The following shows that $\rhocost$ is both submodular and supermodular, i.e., modular,
\[
\begin{array}{lc}
\rhocost(A) + \rhocost(B) &=\\
\sum\limits_{x \in \mathcal{E}(A)}\weightcost(x) + 
\sum\limits_{y \in \mathcal{E}(B)}\weightcost(y) & =\\
\sum\limits_{x \in \mathcal{E}
(A \cup B)}\weightcost(x) + 
\sum\limits_{y \in \mathcal{E}(A \cap B)}\weightcost(y) & =\\
\rhocost(A\cup B) + \rhocost(A \cap B).
\end{array}
\]
\end{proof}

\subsection*{Capacity}

The following example shows that $\rhocap$ is, in general, neither submodular nor supermodular.

\begin{Example}
In Figure~\ref{fig:ThomasCap}, let $p_1$, $p_2$, and $p_3$ represent the red, green, and blue paths from node $a$ to node $e$, respectively, and let the capacities of all the edges be 1.

\begin{figure}
    \centering
    \includegraphics[width=\linewidth]{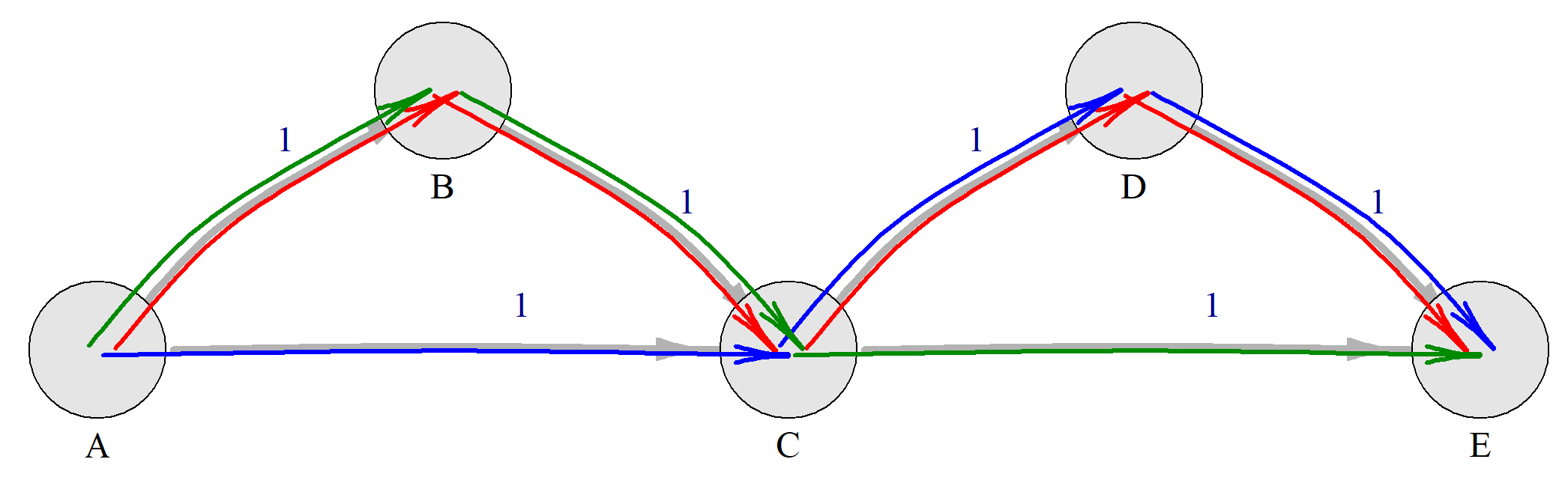}
    \caption{Capacity example}
    \label{fig:ThomasCap}
\end{figure}

If $A$ is the path set $\{p_1, p_2\}$ and $B$ is the path set $\{p_1, p_3\}$, then
\[
\begin{array}{lcl}
\rhocap(A) + \rhocap(B) &=& 1 + 1 \\
&<& 2 + 1 \\
&=& \rhocap(A \cup B) + \rhocap(A \cap B).
\end{array}
\]
\hfill\break

If instead $A$ is the path set $\{p_1\}$ and $B$ is the path set $\{p_2\}$, then
\[
\begin{array}{lcl}
\rhocap(A) + \rhocap(B) &=& 1 + 1 \\
&>& 1 + 0 \\
&=& \rhocap(A \cup B) + \rhocap(A \cap B).
\end{array}
\]
\end{Example}
\hfill\break
\subsection*{Unavailability and Fault Probability}
Both the unavailability $\rhoupr(P)$ and the Fault probability $\rhofpr(P)$ do not satisfy the polymatroid property (R1) of Definition~\ref{def:Polymatroid}. If a set is empty, there is no connection available between two vertices, with the natural consequence that $\rhoupr(\emptyset) = 1$. Following the same principle, if a set is empty there is no working path between two vertices, so even $\rhofpr(\emptyset) = 1$.

However, for the probability of the complementary events, the Availability function $\rhoapr(P)$ and the Serviceability function $\rhospr(P)$, would not suffer by the same issue. 
Investigating about non-decreasing property and submodularity of Availability and Serviceability requires deeper analysis and is considered for further study.
\hfill\break

As a summary:
\begin{itemize}
\item The associated functions to the concept of delay and cost have the polymatroid properties.
\item The associated function to the capacity is neither submodular nor supermodular. This means that the function is not polymatroidial and cannot, in general, be transformed into a polymatroidial set function using equation~\eqref{eq:supermodular_to_submodular}.
\item The associated functions to the unavailability and fault probability do not have the polymatroid properties. However, their transformation into a polymatroidial set function need further analysis. For example, by considering the complementary events and associated function Availability and Serviceability.
\end{itemize}

\section{Conclusions and Future Works}
\label{sec:Conclusion}

Multi-path computation in literature lacks a formal approach to quantify multiple parameters and characterizations of a path set.  
This paper proposes a paradigm shift. By addressing specific properties of graph edges, it introduces an evaluation of vertex pair connectivity that

\begin{itemize}
\item adapts to edge property composition peculiarities, generalizing to non-additive properties, and 
\item generalizes the computation to path sets and preserves the focus on the property.
\end{itemize}

The paper contextualizes the proposed solution for five different functional specifications: delay, administrative cost, capacity, unavailability, and fault probability.

The implementation of the functional model was not covered in this work. It may be the subject of a future study, that will also allow the measurement of its complexity, which was not possible to explore in this paper.

Beyond its immediate applications, this solution opens the door to automation in path selection.

By encapsulating functional characteristics in a meta-function, this model lays the groundwork for future algorithms that can dynamically identify optimal path sets, promising to enhance efficiency and significantly reduce manual workload in data analysis.

A polymatroidal analysis of the function model would support an easier evaluation of network configurations. The paper shows the polymatroidal characterization for delay and administrative cost function. Further analysis of Serviceability and Availability is another natural evolution of this paper.    
\hfill\break

\section*{Acknowledgement}
The work presented in this paper is sponsored by Ericsson, Mälardalen University and the Swedish Knowledge Foundation (KKS), via the industrial PhD School ARRAY.

\newpage

\balance
\bibliographystyle{IEEEtran}
\bibliography{IEEEabrv,Paperbiblio}

\end{document}